\newtheorem{theorem}{Theorem}
\newcommand{\cent}[0]{\mbox{\textcent}}
\newcommand{\dollar}[0]{\$}
\newcommand{\mymatrix}[2]{\left( \begin{array}{#1} #2\end{array} \right)}
\newcommand{\myvector}[1]{\mymatrix{c}{#1}}
\newcommand{\tildesigma}{\widetilde{\Sigma}}
\newcommand{\powereq}{\mathtt{PowerEQ}}
\title{Real-valued affine automata compute beyond Turing machines\thanks{Part of this work was done while Yakary{\i}lmaz was visiting Institute of Theoretical and Applied Informatics, Gliwice, Poland in Autumn 2022.}}
\author{Abuzer Yakary\i lmaz$^{1,2}$ \\ ~ \\
$^1$Center for Quantum Computer Science, University of Latvia, R\={\i}ga, Latvia \and $^2$QWorld Association, Tallinn, Estonia, \url{https://qworld.net}
\\ ~ \\
\textit{email: abuzer.yakaryilmaz@lu.lv}
} 
\begin{document}
\maketitle

\begin{abstract}
We show that bounded-error affine finite automata (AfAs) recognize uncountably many 
(and so some non-Turing recognizable) languages when using real-valued transitions.
\end{abstract}

\section{Introduction}

Probabilistic finite automata (PFAs) have non-negative transition values called probabilities \cite{Rab63}. Quantum finite automata (QFAs), on the other hand, have complex-valued transition values called amplitudes \cite{AY21}. Here each complex-valued transition can be replaced by real-valued transitions by doubling the set of states. Thus, what makes QFAs different from PFAs is using both negative and positive transition values, which allows us to interfere the transitions constructively and destructively. Indeed, QFAs are arguably the simplest computational models to investigate the power of interference (see e.g., \cite{af98,YS10A,AmbY12,SY14,ShurY16}).

Interestingly, the idea of using both negative and positive valued transitions for classical automata already exists in the literature \cite{Hir18}, as old as PFA itself, called generalized automata (GAs) \cite{Tur68,Tur69}. However, up to our knowledge, GAs have never been considered as a model to define languages with bounded error. More specifically, both PFAs and QFAs assign a probability value (between 0 and 1) to each input, but, GAs assign a real value to each input (see \cite{Paz71} for the details about GAs). 

Affine finite automaton (AfA) is a quantum-like generalization of probabilistic finite automaton (PFA) \cite{DCY16}. AfAs can have both negative and positive transition values, with the restriction that, similar to PFAs, the entry summation of state vector must be 1. Then, the operators of AfAs must preserve the summation of state vectors. Similar to QFAs, AfAs can use interference as a computational resource, and, they are a generalization of PFAs as a PFA is a restricted AfA using only with non-negative transition values. 

Another quantum-like property of AfAs is the weighting operator (similar to quantum measurement operators defined with $\ell_2$-norm): The final state vector is normalized with respect to  $\ell_1$-norm, and so each state is observed with some probabilities by guaranteeing that the total probability is 1. Even though the computation of AfAs evolves linearly, the weighting operator does not necessarily to be linear.\footnote{In the abbreviation of AfAs, we use lower case ``f'' to emphasis this non-linearity.}

AfAs have been investigated in a series of papers and compared with QFAs and PFAs. In bounded-error and unbounded-error settings, AfAs are more powerful than both QFAs and AfAs \cite{DCY16,Yak22A}.\footnote{See also \cite{IKPY21,NKPVY22} for affine OBDDs and AfAs with a counter.} Bounded-error AfAs can also be more succinct than both bounded-error QFAs and PFAs \cite{BMY17,VY18,IKPY21,Yak21A}. In case of nondeterministic acceptance mode, AfAs are equaivalent to QFAs, and more powerful than PFAs \cite{DCY16}.

If we use only rational-valued transitions, then bounded-error AfAs can be simulated in deterministic logarithmic space \cite{HMY17,HMY21}. If we use real-valued transitions, then bounded-error AfAs can verify every unary languages similar to bounded-error two-way QFAs (2QCFAs) \cite{KY21A}. 

In this paper, we show that bounded-error AfAs recognize uncountably many 
(and so some non-Turing recognizable) languages when using real-valued transitions. A similar result was shown for bounded-error 2QCFAs \cite{SY17}. Here, we re-construct the proof given for 2QCFAs for AfAs. Both algorithms make consecutive equality check of several blocks of symbols. The 2QCFA algorithm use the polynomial-time algorithm given in \cite{AW02} for each pair of blocks, where the blocks are read many times. On the other hand, AfAs read the input once, and so, we make consecutive equality checks in real time.  Both methods use the encoding technique presented in \cite{ADH97} about the membership of any given language.

We give the definitions and notations used throughout the paper in the next section. In Section~\ref{sec:linear}, we present certain linear operations, which are used in our proofs. We present the AfA algorithm for successive equality check in Section~\ref{sec:equal}. We present the main result in Section~\ref{sec:uncountable}.

\section{Preliminaries}
\label{sec:pre}

We assume the reader is familiar with the basics of automata theory (see e.g., \cite{Sip13}).
We denote the input alphabet $\Sigma$ not containing the left and right end-markers ($\cent$ and $\dollar$, respectively); and $\tildesigma = \Sigma \cup \{\cent,\dollar\}$. We denote the empty string $\varepsilon$ (its length is zero) and it is fixed for every alphabet. For any given non-empty string $x \in \Sigma^*$, $x_i$ represents its $i$-th symbol from the left.

For a given $m>0$, $\{e_1,e_2,\ldots,e_m\}$ represent the set of standard basis (column) vectors in $\mathbf{R}^m$. For a given (column) vector $v$ in $\mathbf{R}^m$, $v[j]$ is its $j$-th entry and $|v|$ is the $\ell_1$-norm of $v$:
\[
    |v| = |v[1]|+\cdots+|v[m]|.
\]

An $m$-state AfA $M$ is a 5-tuple $M=(\Sigma,S,\{A_\sigma \mid \sigma \in \tildesigma\},s_I,S_a)$, where
\begin{itemize}
    \item $S = \{s_1,\ldots,s_m\}$ are the set of states,
    \item $A_\sigma$ is the affine operator representing the transitions when reading symbol $\sigma \in \tildesigma$,
    \item $s_I \in S$ is the initial state, and,
    \item $S_a \subseteq S$ is the set of accepting state(s).
\end{itemize}

The computation of $M$ is traced by an $m$-dimensional column vector called state vector. Its $j$-th entry represents the value of $M$ being in the state $s_j$. The entry summation of state vector must be 1. The computation starts in the basis vector $v_0 = e_I$ (as $s_I$ is the initial state). Any given input $x \in \Sigma^*$ with length $l$, $M$ reads the input from left to right and symbol by symbol including end-markers as
\[
    \cent~x_1~x_2~\ldots~x_l~\dollar. 
\]
Then, the computation evolves as
\[
v_f = A_{\dollar} A_{x_l} A_{x_{l-1}} \cdots A_{x_1} A_{\cent} e_I, 
\]
where $v_f$ is the final affine state vector. At the end, the weighting operator is applied and the state $s_j$ is observed with probability
\[
    \frac{|v_f[j]|}{|v_f|}.
\]
Thus, the input $x$ is accepted with probability
\[
    f_M(x) = \dfrac{\sum_{s_j \in S_a}|v_f[j]|}{|v_f|},
\]
and so it is rejected with probability $1 - f_M(x)$.

A language $L \subseteq \Sigma^*$ is said to be recognized by $M$ with error bound $\epsilon < \frac{1}{2}$ if and only if
\begin{itemize}
    \item every $x \in L$ is accepted by $M$ with probability at least $1 - \epsilon$ and
    \item every $x \notin L$ is rejected by $M$ with probability at least $1 - \epsilon$.
\end{itemize}
Then, it is also said $L$ is recognized by $M$ with bounded error or bounded-error $M$ recognizes $L$.

\section{Linear operations}
\label{sec:linear}

In our algorithms, we use linear operations (matrices) to make calculations with the entries of the state vector while reading the input.
\begin{itemize}
    \item We increase/decrease the value of an entry by some integers $d$: 
    \[ \mymatrix{cc}{1 & 0 \\ d & 1 } \myvector{1 \\ a } = \myvector{1 \\ a + d}. \]
    \item Then, we can count the number of symbols:
    \[
        \mymatrix{cc}{1 & 0 \\ 1 & 1 }^m \myvector{1 \\ 0} = \myvector{1 \\ m},
    \]
    where we suppose to read $m$ symbols and so apply the counting operator $m$ times. For each symbol, we may count by $k \in \mathbf{Z}$:
    \[
        \mymatrix{cc}{1 & 0 \\ k & 1 }^m \myvector{1 \\ 0} = \myvector{1 \\ km},
    \]
    \item We subtract one entry from another:
    \[  \mymatrix{cc}{1  & -1 \\ 0 & 1 } \myvector{a \\ b } = \myvector{a -b \\ b}. \]
    \item We multiply one entry with a value $c$:
    \[  \mymatrix{cc}{1  & 0 \\ 0 & c } \myvector{a \\ b } = \myvector{a \\ b c }. \]
    \item We make counter-clockwise rotations with angle $\theta$ on the unit circle:
    \[
    \mymatrix{cc}{ \cos \theta  & -\sin \theta \\ \sin \theta & \cos \theta } \myvector{ 1 \\ 0 } = \myvector{ \cos \theta \\ \sin \theta}
    \mbox{ and }
    \mymatrix{cc}{ \cos \theta  & -\sin \theta \\ \sin \theta & \cos \theta }^m \myvector{ 1 \\ 0 } = \myvector{ \cos m \theta \\ \sin m \theta}.
    \]
    Remark that any point on the unit circle can be represented by a single angle $ \theta_1 $, and then the point is $(\cos \theta_1, \sin \theta_1)$. If we apply a rotation with angle $\theta_2$ $k$ times, then the obtained point has the angle $ \theta_1 + k \theta_2 \mod 2\pi $.
\end{itemize}

We remind the following basic rules about linear operators, where $A$ and $B$ are matrices and $u$ and $v$ are vectors:
\begin{itemize}
    \item $(A \otimes B) \cdot (u \otimes v) = (A \cdot u) \otimes (B \cdot v)$
    \item $(A \otimes B)^m \cdot (u \otimes v) = (A^m \cdot u) \otimes (B^m \cdot v)$
    \item $\mymatrix{c|c}{ A & 0 \\ \hline 0 & B } \myvector{u \\ \hline v} = \myvector{ A\cdot u \\ \hline B \cdot v } $
\end{itemize}

For a given linear computation
\[
    v' = A_m A_{m-1} \cdots A_1 v,
\]
we use the following simulation by AfAs
\[
\myvector{\\ v' \\ \hline \\ \overline{1}} = 
\mymatrix{c|c}{ \\ A_m & 0 \\ \\ \hline \\ \overline{1} \cdots \overline{1} & 1 }
\mymatrix{c|c}{ \\ A_{m-1} & 0 \\ \\ \hline \\ \overline{1} \cdots \overline{1} & 1 } \cdots 
\mymatrix{c|c}{ \\ A_1 & 0 \\ \\ \hline \\ \overline{1} \cdots \overline{1} & 1 }
\myvector{\\ v \\ \hline \\ \overline{1}},
\]
where $\overline{1}$s are the values to make each column summation 1 so that each operator and vector become affine. 

\section{Consecutive equality check in real-time}
\label{sec:equal}

In this section, we present a bounded-error AfA algorithm for the language
\[
    \powereq = \{ a^{7\cdot 8^0} b a^{7\cdot 8^1} b a^{7\cdot 8^2} b a^{7\cdot 8^3} b \cdots b a^{7\cdot 8^n} \mid n \geq 0 \},
\]
which is slightly different than the language given in \cite{SY17}. Remark that, for each $n\geq 0$, $\powereq$ has one member string that has $8^{n+1}-1$ $a$ symbols.

\begin{theorem}
    The language $\powereq$ is recognized by an AfA $M$ with bounded error.
\end{theorem}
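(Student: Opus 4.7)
The plan is to construct $M$ as an affine finite automaton whose state vector carries, in addition to a fixed \emph{accepting indicator} entry $A$ and the control entries of a DFA-like checker, three data entries $D, C, R$ and one or two filler entries used to balance the column sums of the affine operators. The entry $D$ tracks the expected current block length: initialised to $7$ at the left end-marker and multiplied by $8$ on every $b$, so that while $M$ reads block $k$ the value of $D$ equals $7\cdot 8^k$. The entry $C$ tracks the running deviation of the current block from its target: it is reset to $-D$ at the left end-marker and after every $b$, and incremented by $1$ on every $a$, so that $C$ equals $\delta_k := n_k - 7\cdot 8^k$ at the moment block $k$ ends. Meanwhile a DFA-like subcomponent checks the regular pattern $a^+(ba^+)^*$ together with $n_0 = 7$; once this passes, membership in $\powereq$ is equivalent to every block deviation $\delta_k$ being zero.

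The entry $R$, initialised to $0$, accumulates the block deviations in real time. On each $b$ and on the right end-marker, $M$ performs in a single affine step the updates $R \mapsto \alpha R + C$, $C \mapsto -8D$ and $D \mapsto 8D$, where $\alpha$ is a fixed real number chosen transcendental. A direct induction then gives
\[
R \;=\; \sum_{k=0}^{L}\alpha^{L-k}\,\delta_k
\]
at the end of the computation, where $L$ is the index of the last block. If the input is in $\powereq$ then every $\delta_k$ vanishes and so does $R$; conversely, because the $\delta_k$'s are integers and $\alpha$ is transcendental, $R = 0$ forces the integer-coefficient polynomial $\sum_k X^{L-k}\delta_k$ to be identically zero, so every $\delta_k = 0$ and the input lies in $\powereq$. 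Declaring every state other than the $R$-state to be accepting, the weighting operator outputs acceptance probability $(|v_f| - |R|)/|v_f|$, which is $1$ on members. The simulation scheme at the end of Section~\ref{sec:linear} packages the required updates into a fixed affine operator for each input symbol, with the filler entries absorbing whatever is needed to make every column sum to $1$.

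The main obstacle is turning ``$R \neq 0$'' into a uniform lower bound on the rejection probability, because $|R|$ can be very small when the polynomial in $\alpha$ has large degree. To handle this I plan to also multiply $R$ by a fixed constant $\beta > 1$ on every input symbol (again using a filler entry to restore the affine sum), so that each injected $\delta_k$ ends up multiplied by $\beta^{N - t_k}$, where $N$ is the length of the input and $t_k$ is the time at which $\delta_k$ was injected. Combining this exponential amplification with a transcendence-measure lower bound on $|P(\alpha)|$ for nonzero integer-coefficient polynomials $P$, and using the structural check to control the degree $L$ and the coefficient size of the polynomial in $\alpha$, I expect to push $|R|/|v_f|$ above any prescribed $1-\epsilon$ with $\epsilon < 1/2$ uniformly over non-members of $\powereq$. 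The careful choice of $\alpha$ and $\beta$ and the quantitative analysis of the amplified ratio is where I anticipate the technical bulk of the proof.
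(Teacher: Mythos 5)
Your overall architecture --- linear bookkeeping of block deviations in designated entries, filler rows to keep the operators affine, a single accumulator read off by the weighting operator --- is the same as the paper's. But there is a genuine gap at the one step that actually matters for \emph{bounded} error: turning ``some $\delta_k\neq 0$'' into a rejection probability uniformly above $1/2$. Your accumulator $R=\sum_k \alpha^{L-k}\delta_k$ (or its $\beta$-amplified variant) is a \emph{signed} combination of the deviations, so while transcendence of $\alpha$ guarantees $R\neq 0$ for non-members, it gives no lower bound on $\abs{R}$. Since the state vector always has entry sum $1$, we have $\abs{v_f}\geq 1$, so you need $\abs{R}$ bounded below by an absolute constant, not merely nonzero. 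Transcendence measures give bounds of the shape $\abs{P(\alpha)}\geq c(d)\,H^{-\kappa d}$ in the degree $d$ and height $H$ of the integer polynomial $P$; your $\beta$-amplification inflates the coefficient height by exactly the factors $\beta^{N-t_k}$ you hope to gain, so the lower bound degrades at least as fast as the amplification helps, and inputs with several nearly-cancelling deviations (e.g.\ $\delta_1$ chosen as the nearest integer to $-\beta^{t_1-t_0}\alpha\,\delta_0$) leave $\abs{R}/\abs{v_f}$ far below $1/2$. The quantitative analysis you defer is not a technicality; as set up, it cannot be made to work.

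The paper's construction avoids signed cancellation entirely by \emph{squaring} each deviation before accumulating it, using the tensor product: it maintains a vector $v'=(1,\;8t_{j-1},\;t_j)^T$ and works with $v'\otimes v'$, whose fifth entry after the update on $b$ is $(t_j-8t_{j-1})^2$. The accumulator is then $T_x=(t_0-7)^2+\sum_{j\geq 1}(t_j-8t_{j-1})^2$, a sum of squares of integers, hence $0$ exactly for members and $\geq 1$ for every non-member --- a uniform integer gap with no transcendence argument needed. Arranging the final vector as $(1,\;kT_x,\;-kT_x,\;0,\dots,0)^T$ makes the entry sum exactly $1$ while $\abs{v_f}=1+2kT_x$, so non-members are rejected with probability $2kT_x/(2kT_x+1)\geq 2k/(2k+1)$, which can be pushed arbitrarily close to $1$. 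That tensor-squaring step is the missing idea in your proposal; everything else in your plan (the $D$, $C$ bookkeeping, the affine packaging) is compatible with it.
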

\begin{proof}
    We construct our AfA $M$ step by step. It has 19 states $\{s_1,\ldots,s_{19}\}$. The initial state is $s_1$, which is also only the accepting state.
    
    Let $x$ be a given input with $n \geq 0$ $b$ symbol(s):
    \[
    a^{t_0} b a^{t_1} b a ^{t_2} b a^{t_3} b \cdots b a^{t_n}, 
    \]
    where $t_0,t_1,\ldots,t_n \geq 0$. It is easy to see that $ x $ is in $\powereq$ if and only if 
    \begin{itemize}
        \item $t_0 = 7$ and
        \item $t_{j} = 8 t_{j-1} $ for $j  = 1,\ldots,n$.
    \end{itemize}  

    The AfA $ M $ encodes $ t_j $ or $8t_j$ into the values of some states while reading $t_j$ $a$s. Besides it makes the subtraction $ (t_{j+1} - 8 t_j )  $, and, by tensoring, it calculates $ (t_{j+1} - 8 t_j )^2 $ which is always a non-negative integer. Thus, in order to determine the membership of $x$, $M$ calculates the following cumulative summation
    \[
        T_x = (t_0-7)^2 + (t_1 - 8 t_0)^2 + (t_2 - 8 t_1)^2 + (t_3 - 8 t_2)^2 + \cdots + (t_n-8 t_{n-1})^2
    \]
    as the value of some states. Observe that if $x$ is in $\powereq$, then the summation $T_x$ is equal to 0, and, if $x$ is not in $\powereq$, then the summation $T_x$ is a positive integer. 

    The final state vector of $M$ on $x$ is
    \begin{equation}
        \label{eq:vf}
        v_f = \myvector{1 \\ k \cdot T_x \\ - k \cdot T_x \\ 0 \\ \vdots \\ 0  }
    \end{equation}
    for some integer $k>1$. That is, if $x \in \powereq$, then it is accepted with probability 1 as the second and third entries will be zero. If $x \notin \powereq $, then it is rejected with  probability $\frac{2k}{2k+1}$, which can be arbitrarily close to 1 by picking large $k$ values.

    The computations starts in $v_0 = \left( 1 ~~ 0 ~~ \cdots ~~ 0  \right)^T$. During the computation, the state vector is split into four parts as
    \[
        v = 
     \myvector{ v^{'} \otimes v^{'} \\ \\ \hline \\ v^{''} \otimes v^{''} \\ \\ \hline \\ T \\ \\ \hline \\ \overline{1} },
    \]
    where
    \begin{itemize}
        \item both $v^{'}$ and $v^{''}$ are integer vectors in the form of $\myvector{1 \\ a \\ b}$ for some $a,b \in \mathbf{Z}$, 
        \item $T$ is the partial summation of $T_x$ until that moment of the computation, and,
        \item $\overline{1}$ is a value to guarantee that the summation of all entries is equal to 1.
    \end{itemize}

    Suppose that we are reading the block $a^{t_j}$ for $ j \in \{0,\ldots,n\} $. The value of $t_j$ is encoded on the third state of $v'$ and the value of $ 8 \cdot t_j $ is encoded on the second state of $v''$. Here the second entry of $v'$ is already set to $8 \cdot t_{j-1} $ (except $j=0$, see below) and the third entry of $v''$ is 0. Thus, after reading $a^{t_j}$, we have
    \[
        v' = \myvector{1 \\ 8  t_{j-1} \\ t_j } 
        \mbox{ and }
        v'' = \myvector{1 \\ 8 t_j \\ 0}.
    \]
    The case for $j=0$ is handled by using the left-end marker: After reading $\cent$, we set
    \[
        v' = \myvector{1 \\ 7 \\ 0}
        \mbox{ , }
        v'' = \myvector{1 \\ 0 \\ 0}
        \mbox{ , and }
        T = 0.
    \]

    When reading a $b$ after the block $a^{t_j}$, we do the following linear operations:
    \begin{enumerate}
        \item The vector $v'$ is updated as $\myvector{1 \\ t_j - 8 t_{j-1} \\ 0 } $. 
        \item The fifth entry of $v' \otimes v'$ becomes $ (t_j - 8 t_{j-1} ) ^ 2 $. This value is added to $T$.
        \item The vector $v' \otimes v'$ is set to $ 
    v'' \otimes v''= \myvector{1 \\ 8t_j \\ 0 } \otimes \myvector{1 \\ 8t_j \\ 0 }  $.
        \item The vector $v''$ is set to $\myvector{1 \\ 0 \\ 0}$.
    \end{enumerate}
    For each item, there is an integer-valued linear matrix, where the column summation does not necessarily to be 1. By using the last row, we can easily make them affine operator as explained in the previous section.  So, the affine operator for $b$ is the multiplication of all of them.

    When reading $\dollar$, we do the operations Items 1 and 2 given for reading symbol $b$, and then we put the final vector in the form of $v_f$, see Eq.~\ref{eq:vf}.
\end{proof}

\section{Computing uncountable many languages}
\label{sec:uncountable}

For any given finite alphabet $\Sigma$, we enumerate all possible strings in lexicographic order. For example, if $\Sigma = \{a\}$ or $\Sigma=\{a,b\}$, then we have 
\[
\begin{array}{lll}
     0 & \varepsilon & \varepsilon \\
     1 & a & a \\
     2 & aa & b \\
     3 & a^3 & aa \\
     4 & a^4 & ab \\
     5 & a^5 & ba \\
     6 & a^6 & bb \\
     7 & a^7 & aaa \\
     \vdots & \vdots & \vdots \\
\end{array}.
\]
The following idea is based on the enumeration of the strings, and so, it is applicable to any finite alphabet. For simplicity, we focus on unary alphabet $\Sigma = \{a\}$. Remark that all unary languages form an uncountable set.

Let $ L $ be a language defined on $\{a\}$. We use the encoding technique presented in \cite{ABGKMT06}. Here is a function for the membership of strings as
\[
F_L\left( a^j \right) = \left\{ \begin{array}{rl}
   1,  &  \mbox{if } a^j \in L \\
   -1,  & \mbox{if } a^j \notin L
\end{array} \right. ,
\] 
and here is an angle (in radian) to store all membership information:
\[
\theta_L = 2 \pi \sum_{i=0}^{\infty} \left( \frac{F_L(a^i)}{8^{i+2}}\right).
\]

We define language $\powereq(L)$ as
\[
    \powereq(L) = \{ a^{7\cdot 8^0} b a^{7\cdot 8^1} b a^{7\cdot 8^2} b a^{7\cdot 8^3} b \cdots b a^{7\cdot 8^n} \mid n \geq 0 \mbox{ and } a^n \in L \}.
\]
Similar to all unary languages, the set $\left\{ \powereq(L) \mid L \subseteq \{a\}^* \right\}$ is uncountable and so contains non-Turing recognazible languages.

\begin{theorem}
    For a given unary language $L$, $\powereq(L)$ is recognized by an AfA $M_L$ with bounded error.
\end{theorem}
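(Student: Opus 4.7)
My plan is to augment the AfA of the previous theorem with a two-dimensional rotation sub-system that encodes the $L$-membership of $a^n$, where $n$ is the number of $b$ symbols in the input. The existing structural block will still compute $T_x$ so that $T_x=0$ holds iff $x$ conforms to the $\powereq$ pattern, while the rotation block will contribute a second, independent ``error signal'' that is small iff $a^n\in L$. Running the two sub-systems in parallel inside one affine state vector will be handled by the block-diagonal / padding recipe of Section~\ref{sec:linear}.

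For the rotation block I will use two extra coordinates initialized to $(1,0)$ and apply a rotation by angle $\theta_L$ once on $\cent$ and once per $a$, leaving the block unchanged on $b$. Since a well-formed input contains exactly $8^{n+1}-1$ $a$ symbols, the block will hold $(\cos 8^{n+1}\theta_L,\;\sin 8^{n+1}\theta_L)$ just before reading $\dollar$. On $\dollar$ I apply an extra rotation by $-\pi/4$, so the block ends in $(\cos\alpha,\sin\alpha)$ with $\alpha = 8^{n+1}\theta_L - \pi/4$. Expanding
\[
8^{n+1}\theta_L \;=\; 2\pi \sum_{i=0}^{\infty} F_L(a^i)\cdot 8^{n-1-i},
\]
the terms with $i\leq n-1$ are integer multiples of $2\pi$, the $i=n$ term equals $F_L(a^n)\pi/4$, and the tail with $i\geq n+1$ is bounded in absolute value by $2\pi\sum_{i\geq n+1}8^{n-1-i}=\pi/28$. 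Hence $\alpha\bmod 2\pi$ lies within $\pi/28$ of $0$ when $a^n\in L$ and within $\pi/28$ of $-\pi/2$ when $a^n\notin L$, giving $|\sin\alpha|\leq \sin(\pi/28)<0.12$ in the first case and $|\sin\alpha|\geq \cos(\pi/28)>0.99$ in the second.

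To combine the two checks, on $\dollar$ I will also scale the $\sin\alpha$ coordinate by a large integer $M$ and then arrange the final state vector so that a single accepting state carries value $1$ while the only other nonzero entries are the pairs $(kT_x,-kT_x)$ and $(M\sin\alpha,-M\sin\alpha)$ (with the remaining coordinates sitting at $0$ and absorbing the affinity condition). The weighting formula then yields
\[
f_{M_L}(x) \;=\; \frac{1}{1 + 2kT_x + 2M|\sin\alpha|}.
\]
For $x\in\powereq(L)$ we have $T_x=0$ and $|\sin\alpha|\leq\sin(\pi/28)$; for $x\in\powereq$ with $a^n\notin L$ we have $T_x=0$ but $|\sin\alpha|\geq\cos(\pi/28)$; and for $x\notin\powereq$ we have $T_x\geq 1$. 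Fixing $M\geq 1$ and then taking $k$ sufficiently large forces all three error probabilities below a common $\epsilon<1/2$, which is exactly bounded-error recognition.

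The main obstacle is not conceptual but mechanical: the rotation by $\theta_L$, the final rotation by $-\pi/4$, and the scalar multiplication by $M$ are each $2\times 2$ linear maps whose column sums are not $1$, so they must be lifted to genuine affine operators by the padding trick of Section~\ref{sec:linear}, and this bookkeeping must be performed in lock-step with the integer-valued PowerEQ matrices already used to track $T_x$. Modulo that packaging step, everything follows from the angle-separation estimate above.
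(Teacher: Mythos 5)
Your proposal is correct and follows essentially the same route as the paper: the same angle $\theta_L$, one rotation per $a$ so that after $8^{n+1}$ rotations the angle collapses to $F_L(a^n)\pi/4$ plus a tail of magnitude at most $\pi/28$, a final $\pi/4$-shift to separate the two cases, and a parallel (tensor/block) composition with the $\powereq$-checker so that the weighting operator combines $T_x$ with the rotation outcome. The packaging of the final vector is where you diverge: the paper tensors the rotation block with itself so that $\sin^2\alpha$ sits directly in the accepting state and $\cos^2\alpha$ in a rejecting one, giving acceptance probability $\sin^2\alpha/(1+2kT_x)$ and error about $0.02$ on well-formed inputs; you instead keep the accepting entry at $1$ and add a pair $(\pm M\sin\alpha)$, giving $1/(1+2kT_x+2M\lvert\sin\alpha\rvert)$. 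This avoids squaring the rotation block (a smaller automaton) at the cost of a weaker error constant. Your handling of the off-by-one (an extra $\theta_L$-rotation on $\cent$, since a well-formed input has $8^{n+1}-1$ symbols $a$) is a detail the paper glosses over and you get right.

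One quantitative claim in your write-up is wrong as stated: it is not true that \emph{any} fixed $M\geq 1$ works. For members, $\lvert\sin\alpha\rvert$ can be as large as $\sin(\pi/28)\approx 0.112$, so the acceptance probability is only guaranteed to be at least $1/(1+0.224M)$, which falls below $1/2$ once $M\geq 5$; increasing $M$ pushes \emph{both} the $a^n\in L$ and $a^n\notin L$ cases toward rejection, so $M$ must be a small constant rather than ``large.'' With $M=1$ the three cases give errors at most roughly $0.19$, $0.34$, and $1/(2k+1)$ respectively, so a common $\epsilon<1/2$ exists and the construction goes through; just replace ``large integer $M$'' and ``fixing $M\geq 1$'' by a concrete small choice such as $M=1$.
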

\begin{proof}
    We present a 5-state AfA $M_L'$, and then tensor it with the AfA $M$ given in the previous section. We start with the idea of how to use angle $\theta_L$.
    
    We use the rotation with angle $\theta_L$ on the unit circle (on $\mathbf{R}^2$). We start with angle 0, i.e., the vector is $\myvector{1 \\ 0}$. For $ j = 0,1,2,\ldots$, if we rotate with angle $\theta_L$ $8^{j+1}$ times, we obtain the angle
    \[
        8^{j+1} \cdot 2 \pi \sum_{i=0}^{\infty} \left( \frac{F_L(a^i)}{8^{i+2}}\right) = 
        \pi \frac{F_L(a^j)}{4} + 2\pi \left( \frac{F_L(a^{j+1})}{8^2} + \frac{F_L(a^{j+2})}{8^3} + \cdots \right) .
    \]
    After that, if we make another rotation with angle $\frac{\pi}{4}$, we obtain the angle
    \[
        \phi_j = 
        \frac{\pi}{4} \left( F_L(a^j) + 1 \right) + 2\pi \left( \frac{F_L(a^{j+1})}{8^2} + \frac{F_L(a^{j+2})}{8^3} + \cdots \right) .
    \]
    Observe that if $a^j \in L$, then this angle is close to $\frac{\pi}{2}$, and, if $a^j \notin L$, then it is close to $0$. Let the following vector be the corresponding point on the unit circle:
    \[
        \myvector{\cos \phi_j \\ \sin \phi_j}.
    \]
    As given in \cite{ABGKMT06}, if $a^j \in L$, then $\sin^2 \phi_j$ is not less than $0.98$, and, if $a^j \notin L$, then $\cos^2 \phi_j$ is not less than $0.98$.

    Now, we present the details of the 5-state AfA $M_L'$. Its state vector is split into two parts as
    \[
        \myvector{ u \otimes u \\ \\ \hline \\ \overline{1} }.
    \]
    The initial state vector is
    \[
        \myvector{ \myvector{1 \\ 0 } \otimes \myvector{1 \\ 0} \\ \\  \hline \\ 0 }.
    \]
    Whenever symbol $ a $ is read, the rotation with angle $\theta_L$ is applied to $u$. Whenever symbol $b$ is read, the identity operator is applied. When reading $\dollar$, first the vector $u$ is rotated with angle $\frac{\pi}{4}$:
    \[
        \myvector{ \myvector{\cos \alpha \\ \sin \alpha} \otimes \myvector{\cos \alpha \\ \sin \alpha} \\ \\ \hline \\ \overline{1}  }  = \myvector{ \cos^2 \alpha \\ \cos \alpha \cdot \sin \alpha \\ \sin\alpha \cdot \cos \alpha \\ \sin^2 \alpha \\ -2\sin \alpha \cdot \cos \alpha },
    \]
    where $\alpha$ is the final angle. Second, we apply the following affine operator
    \[
        \mymatrix{rrrrr}{0 & 0 & 0 & 1 & 0 \\ 1 & 0 & 0 & 0 & 0 \\ 0 & 1 & 1 & 0 & 1 \\ 0 & 1 & 1 & 0 & 1 \\ 0 & -1 & -1 & 0 & -1}
    \]
    and so we obtain the final state vector as
    \[
        \myvector{ \sin^2 \alpha \\ \cos^2 \alpha \\ 0 \\ 0 \\0 }.
    \]

    The AfA $M_L$ runs both AfAs $M$ and $M_L'$ on the given input in parallel (their state vector and operators (matrices) are tensored). The final state vector of $M_L$ is set to 
    \[
        v_f'' = \myvector{ \sin^2 \alpha \\ \cos^2 \alpha \\ k \cdot T_x ( \sin^2 \alpha + \cos^2 \alpha) \\ - k \cdot T_x ( \sin^2 \alpha + \cos^2 \alpha) \\ 0 \\ \vdots \\ 0 }
    \]
    
    The automaton $M_L$ has 95 states and the first state is the only accepting state, which is also the initial state. Let $x$ be the given input. 

    \textbf{The case $x \in \powereq(L)$:} As $x$ is in $\powereq$, $T_x = 0$. So, the decision is made solely on the values $\sin^2 \alpha$ and $\cos^2 \alpha$. As described above, $\sin^2 \alpha$ is not less than 0.98 and so $x$ is accepted with probability no less than 0.98. 

    \textbf{The case $x \notin \powereq(L)$:} If $x \notin \powereq$, then $T_x \geq 1$ and so it is rejected with probability at least $\frac{2k + \cos^2 \alpha}{ 2k +1 }$, which is no less than $\frac{2k}{2k+1}$. If $x \in \powereq$, then $T_x = 0$ and so the decision is made solely on the values $\sin^2 \alpha$ and $\cos^2 \alpha$. As described above, $\cos^2 \alpha$ is not less than 0.98 and so $x$ is rejected with probability no less than 0.98. 
\end{proof}

\section*{Acknowledgment}

Yakary{\i}lmaz was partially supported by the ERDF project Nr. 1.1.1.5/19/A/005 ``Quantum computers with constant memory'' and the project ``Quantum algorithms: from complexity theory to experiment'' funded under ERDF programme 1.1.1.5.

\bibliographystyle{plain}
\bibliography{ref}

\end{document}